\newtheorem{theorem}{Theorem}
\newtheorem{lemma}{Lemma}
\newtheorem{definition}{Definition}
\begin{document}

\title{\LARGE{Resilient and Decentralized Control of Multi-level Cooperative Mobile Networks to Maintain Connectivity under Adversarial Environment}} 
\vspace{-2mm}
\author{Juntao Chen and Quanyan Zhu \medskip
\\{~}
{ Department of Electrical and Computer Engineering, Tandon School of Engineering}\\{~}{New York University, Brooklyn, 11201, USA. E-mail: \{jc6412, qz494\}@nyu.edu.}\vspace{-4mm}}

\maketitle

\begin{abstract}               
Network connectivity plays an important role in the information exchange between different agents in the multi-level networks. In this paper,  we establish a game-theoretic framework to capture the uncoordinated nature of the decision-making at different layers of the multi-level networks. Specifically, we design a decentralized algorithm that aims to maximize the algebraic connectivity of the global network iteratively. In addition, we show that the designed algorithm converges to a Nash equilibrium asymptotically and yields an equilibrium network. To study the network resiliency, we introduce three adversarial attack models and characterize their worst-case impacts on the network performance. Case studies based on a two-layer mobile robotic network are used to corroborate the effectiveness and resiliency of the proposed algorithm and show the interdependency between different layers of the network during the recovery processes.
\end{abstract}



\section{Introduction}
Teams of mobile cooperative robots have a wide range of applications, such as rescue, monitoring, and searching in space exploration. One of the challenges in this kind of mobile robotic network (MRN) is to maintain the connectivity between robots, since a higher connectivity enables faster information spreading and hence a high level of situational awareness. Connectivity control of the MRN has been addressed in a number of previous works including \cite{michael2009maintaining,
kim2006maximizing,simonetto2011distributed} which have successfully tackled a single network of cooperative robots. Recent advances in networked systems have witnessed emerging applications involving multi-layer networks or \textit{network-of-networks} \cite{d2014networks,martin2014algebraic}. For example, when unmanned aerial vehicles (UAVs) and unmanned ground vehicles (UGVs) execute tasks together, the whole network can be seen as a two-layer interdependent network as shown in Fig. \ref{UAV}. Another example is the complex networks including mobile vehicular network and communication networks in public infrastructures. The interaction between mobile vehicles needs the support from communication network thus making two networks coupled. Therefore, the current single network control paradigm is not sufficient yet to address new challenges related to the analysis and design of multi-layer mobile networks.

\begin{figure}[t]
\centering
\includegraphics[width=1.8in]{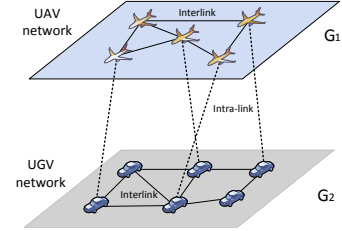}
\caption{Two-layer mobile robotic networks. One network consists of 5 UAVs, and the other network consists of 6 UGVs.}\label{UAV}\vspace{-5mm}
\end{figure}

The main objective of this work is to develop a theoretic framework that can capture the interactions between robots within a network and across networks. In our problem setting, each layer of the robotic network aims to maximize the connectivity of the overall network. If the whole network is fully cooperative or governed by a single agent, then the designed network is a \textit{team-optimal} solution. However, in practice, different layers of robotic networks are often operated by different entities, which makes the coordination between separate entities difficult. For example, in the previous UAV and UGV two-layer networks in Fig. 1, though the objectives for two networks are aligned, UAV is operated by the air force while UGV is operated by the army, which could lead to insufficient coordination. To address this problem, we establish a game-theoretic model in which two players control robots to maximize the global connectivity independently. This framework captures the lack of coordination between players in the multi-level networks. Furthermore, it guides the algorithmic design of decentralized mechanism for achieving an \textit{equilibrium} solution that is close to the team-optimal solution. In this paper, two players update their own robotic configuration based on the current one to maximize the network connectivity. This generates an iterative algorithm which converges to a Nash equilibrium (NE) point asymptotically and yields an equilibrium MRN.
   
An MRN is prone to adversarial attacks since a robot can be controlled by an adversary \cite{zengin2007real,kerns2014unmanned}, and the communication links between robots can be jammed \cite{xu2006jamming,
karlof2003secure,zhu2011eavesdropping}. Therefore, resilient control of the multi-level robotic networks to malicious attacks is critical to enhance its resiliency. To this end, we model the mobility of the robots by taking into account the communication links within and across the networks and use a game-theoretic approach to develop a resilient and decentralized algorithm for the individual networks. To study the network resiliency, we consider three attack models including the global positioning system (GPS) spoofing attack, targeted jamming attack and denial-of-service (DoS) attack. The impact of each attack can be quantified by measuring the difference of the algebraic connectivity of the network under a certain type of attack and without attacks. In addition, we characterize the worst-case of each attack and identify the interdependencies that exist in the multi-level networks. Case studies show that the robot removal resulting from the DoS attack will lead to the largest decrease in the network connectivity, and the MRN is the most resilient to the GPS spoofing attack which results in the constrained physical movement of robots by using the proposed control method. Furthermore, robots in the network without attacks will respond by moving to the positions that can set up the most  communication links with the robots in the attacked network during the recovery processes, and this shows the interdependency in the multi-level networks.

The contributions of this paper are summarized as follows:
\begin{enumerate}
\item We establish a game-theoretic framework that enables a decentralized control of mobile robots in the multi-level networks.
\item We introduce three attack models to the network and characterize their worst-case scenarios. In addition, we design a resilient and decentralized algorithm that aims to maximize the algebraic connectivity of the global MRN.
\item We show the convergence of the designed algorithm to a NE asymptotically, and corroborate its effectiveness and resiliency by case studies. The existence of interdependency in the multi-level MRN is also verified.

\end{enumerate}

\subsection{Related Work}
In the previous works, robotic network connectivity control problem is often addressed either in a totally centralized way  \cite{michael2009maintaining,kim2006maximizing,
ghosh2006growing,dai2011optimal}, or completely decentralized way  \cite{simonetto2011distributed,sabattini2013decentralized,
zavlanos2008distributed}. The centralized framework yields an optimal network but with low resiliency, since responses to failures in a global network may not be instantaneous. The network resulting from the decentralized framework is resilient; however, it is difficult to achieve the team solution with limited coordination. Our framework stands in between these two frameworks and thus leads to balanced features in terms of resiliency and optimality.

\subsection{Organization of the Paper}
The rest of the paper is organized as follows: Section~\ref{s1} presents some basics of graph theory and interdependent networks. We formulate a multi-level network formation game problem in Section~\ref{s2}. System dynamics discretization and the equilibrium solution concept is presented in Section \ref{s3}. A semidefinite programming approach and an iterative algorithm are proposed in Section~\ref{semi}. Section~\ref{s4} introduces three types of attacks to the MRN and characterizes their corresponding worst-case conditions. Case studies are given in Section~\ref{s5}, and Section~\ref{conclusion} concludes the paper.

\section{Background and Interdependent\\ Network Model}\label{s1}
Let $G(V,E)$ be an undirected graph composed by a set $V$ of $n$ nodes and a set $E$ of $m$ links with $n=|V|$, and $m=|E|$. For a link $e\in E$ connecting nodes $i$ and $j$ where the link weight is equal to $w_{ij}$, we define two vectors $\textbf{a}_e\in\mathbb{R}^n$ and $\textbf{b}_e\in\mathbb{R}^n$, where $\textbf{a}_e{(i)}=1$, $\textbf{a}_e{(j)}=-1$, $\textbf{b}_e{(i)}=w_{ij}$, $\textbf{b}_e{(j)}=-w_{ij}$, and other entries 0. Then, the Laplacian matrix $\textbf{L}$ of network $G$ can be expressed as 
\begin{equation}\label{laplacian2}
\textbf{L}=\displaystyle\sum_{e=1}^m{\textbf{a}_e}{\textbf{b}_e^T}.
\end{equation}
Basically, for the weighted Laplacian matrix $\textbf{L}$, its diagonal entries are equal to $\textbf{L}_{ii}=\sum_{j\in N_i}w_{ij}$, $\forall i \in V$, where $N_i$ denotes the set of nodes that are connected to node $i$. In addition, $\textbf{L}_{ij}=-w_{ij}$ if nodes $i$ and $j$ are connected, for $i\ne j\in V$, and 0 otherwise.  Note that $w_{ij}=w_{ji},\ \forall i,j\in V$.
By ordering the eigenvalues of $\textbf{L}$ in an increase way, we obtain
\begin{equation}
0=\lambda_1\le \lambda_2\le...\le \lambda_n,
\end{equation}
where $\lambda_2(\textbf{L})$ is called \textit{algebraic connectivity} of $G$. In addition, the \textit{Fiedler vector} of a network refers to the eigenvector associated with the eigenvalue $\lambda_2(\textbf{L})$ \cite{fiedler1973algebraic}.

For a two-layer interdependent network, we define two networks $G_1(V_1,E_1)$ and $G_2(V_2,E_2)$, where network 1 and network 2 are represented by $G_i$, for $i=1,2$, respectively. Network $i$, $i\in\{1,2\}$, is composed of $n_i=|V_i|$ nodes and $m_i=|E_i|$ links. The global network resulting from the connection of these two networks can be represented by $G=(V_1\ {\cup}\ V_2,\ E_1\ {\cup}\ E_2\ {\cup}\ E_{12})$, where $E_{12}$ is a set of \textit{intra-links} between $G_1$ and $G_2$. For convenience, we denote the network consisting of the intra-links between $G_1$ and $G_2$ as $G_{12}$. 
The adjacency matrix $\textbf{A}$ of the global network $G$ has the entry
$$ a_{ij}=\left\{
\begin{aligned}
w_{ij},\ &\mathrm{nodes}\ i\ \mathrm{and}\ j\ \mathrm{are\ connected};\\
0,\ &\mathrm{nodes}\ i\ \mathrm{and}\ j\ \mathrm{are\ not\ connected}.
\end{aligned}
\right.
$$
Let $\textbf{A}_1\in \mathbb{R}^{n_1\times n_1}$ and $\textbf{A}_2\in \mathbb{R}^{n_2\times n_2}$ be the adjacency matrices of $G_1$ and $G_2$, respectively, and $n=n_1+n_2$. When $E_{12}\ne{\emptyset}$, the adjacency matrix $\textbf{A}\in \mathbb{R}^{n\times n}$ takes the following form
\begin{equation*}
\textbf{A}=\begin{bmatrix}
\textbf{A}_1&\textbf{B}_{12}\\
\textbf{B}_{12}^T&\textbf{A}_2
\end{bmatrix},
\end{equation*}
where $\textbf{B}_{12}\in \mathbb{R}^{n_1\times n_2}$ is a matrix used to capture the effect of intra-links between networks. Define two diagonal matrices $\textbf{D}_1\in \mathbb{R}^{n_1\times n_1}$ and $\textbf{D}_2\in \mathbb{R}^{n_2\times n_2}$ as
$$
(\textbf{D}_1)_{ii}=\sum_{j}{(\textbf{B}_{12})_{ij}},\
(\textbf{D}_2)_{ii}=\sum_{j}{(\textbf{B}_{12}^T)_{ij}}.
$$
Then, by using
$
\textbf{L}=\textbf{D}-\textbf{A}, 
$
we obtain the Laplacian matrix
\begin{equation}\label{L}
\textbf{L}=\begin{bmatrix}
\textbf{L}_1+\textbf{D}_1&-\textbf{B}_{12}\\
-\textbf{B}_{12}^T&\textbf{L}_2+\textbf{D}_2
\end{bmatrix},
\end{equation}
where $\textbf{L}_1$ and $\textbf{L}_2$ are the Laplacians corresponding to $\textbf{A}_1$ and $\textbf{A}_2$, respectively.

\textit{\textbf{Remark 1}:} The above formulated two-layer interdependent framework can be easily extended to multi-layer cases.

\section{Problem Formulation}\label{s2}
In this section, we formulate a two-level mobile robotic network formation problem using a game-theoretic framework.

\subsection{Two-level Network Formation Game}\label{game_formulation}
The \textit{position} of robots in the network is denoted by a vector $\textbf{x}(t)=\big(x_1(t),\ x_2(t),...,x_n(t)\big)\in{\mathbb{R}^{3\times n}}$, and the dynamic of each robot $i$ is given by
$
\dot{x}_i(t)=u_i(t),
$
where $u_i(t)\in \mathbb{R}^{3}$ is the control of robot $i$ at time $t$. Besides, robots in the network can exchange data via wireless communications. Denote the communication link between robots $i$ and $j$ as $(i,j)$. Then, the strength of the communication link $(i,j)$ can be captured by the weight of the link. Thus, we can assign a weight function
$
w:\mathbb{R}^3\times \mathbb{R}^3\rightarrow \mathbb{R}_+
$
to each communication link $(i,j)$, such that
\begin{equation*}
w_{ij}(t)=w\big(x_i(t),x_j(t)\big)=g(\parallel
x_{ij}(t)\parallel_2),
\end{equation*}
for some $g:\mathbb{R}_+\rightarrow \mathbb{R}_+$, where $x_{ij}(t):=x_i(t)-x_j(t)$. The strength of a communication link decays exponentially with the distance \cite{tse2005fundamentals}. Therefore, the entries  $A_{ij}$ of the adjacency matrix $\textbf{A}$ admit
\begin{equation}\label{law}
A_{ij}=\left\{
\begin{aligned}
&1,\ &\parallel x_{ij}(t)\parallel_2<\rho_1;\\
&e^{\frac{-\alpha(\parallel x_{ij}(t)\parallel_2-\rho_1)}{\rho_2-\rho_1}},\ &\rho_1\le \parallel x_{ij}(t)\parallel_2\le \rho_2;\\
&0,\ &\parallel x_{ij}(t)\parallel_2> \rho_2,
\end{aligned}
\right.
\end{equation}
for $\rho_1,\rho_2\in\mathbb{R}_+$, and $\rho_1<\rho_2$. When the distance between the robots is less than $\rho_1$, the connectivity strength is up to 1; and when the distance is larger than $\rho_2$, robots lose the connection.

The model of a two-layer MRN is similar to the one in Fig. \ref{UAV}. Robots in the upper layer belong to network $G_1$, and robots in the bottom layer belong to network $G_2$. For convenience, we label robots in $G_1$ as $1,2,...,n_1\in V_1$, and robots in $G_2$ as $n_1+1,n_1+2,...,n_1+n_2\in V_2$. In addition, robots in the same layer and various layers can communicate with each other, and these communication links are called \textit{inter-links} and \textit{intra-links}, respectively. Note that exchanging data between robots in different layers is more difficult than that of the robots in the same layer due to much longer distance. Thus, to enable the information exchange between $G_1$ and $G_2$, we assume that the communication strength of intra-links has a larger value of $\rho_1$ and $\rho_2$ comparing with that of inter-links. 

For simplicity, define $-\gamma\triangleq \{1,2\}\setminus \gamma$, where $\gamma\in\{1,2\}$. We consider that two players, player 1 ($P_1$) and player 2 ($P_2$), play a network formation game. $P_1$ controls robots in network $G_1$, and $P_2$ controls robots in $G_2$. Specifically, $P_1$ and $P_2$ update their own mobile network iteratively by controlling the robots' positions which are denoted as $\textbf{x}_1$ and $\textbf{x}_2$, respectively. Note that $\textbf{x}_1:=(x_1,...,x_{n_1})\in\mathbb{R}^{3\times n_1}$, $\textbf{x}_2:=(x_{n_1+1},...,x_{n})\in\mathbb{R}^{3\times n_2}$, and $\textbf{x}:=(\textbf{x}_1,\textbf{x}_2)$. For each update, $P_\gamma$'s strategy is based on the current configuration of network $G_{-\gamma}$. The objectives of players are aligned by maximizing the algebraic connectivity of the whole network $G$, $\lambda_2\big(\textbf{L}_G(\textbf{x})\big)$, at every step. In addition, the action spaces of $P_1$ and $P_2$ are denoted by $\textbf{X}_1$ and $\textbf{X}_2$, respectively, which include all the possible network configurations. The set of pure strategy profiles $\textbf{X}:=\textbf{X}_1\times \textbf{X}_2$ is the Cartesian product of the individual pure strategy sets. Besides, the utility function for both players is $\lambda_2\big(\textbf{L}_G(\textbf{x})\big)$: $\textbf{X}\rightarrow \mathbb{R}_+$.

\textit{\textbf{Remark 2}:} In general, the objectives of two players can be different rather than maximize the algebraic connectivity of the global network. However, in our problem setting, the two teams of robots execute tasks collaboratively, and thus they both aim to optimize the global network connectivity to improve communications.

In the MRN formation game, one essential constraint is the minimum distance between the robots in the same layer. Without this constraint, all robots in the same layer will converge to one point finally which is unreasonable in reality. Thus, we assign a minimum distance for robots in $G_1$ and $G_2$ denoted by $d_1$ and $d_2$, respectively. Then, the network formation game can be represented by two individual \textit{interdependent} optimization problems  $Q_1^t$ and $Q_2^t$ as follows:
\begin{equation}\label{p1}
\begin{split}
Q_1^t:\ \ \ \max_{\textbf{x}_1(t+\tau_1)}&\ \ \lambda_2\big(\textbf{L}_G(\textbf{x}(t+\tau_1))\big)\\
\mathrm{s.t.}\ \ \ &||x_{ij}(t+\tau_1)||_2^2\ge d_1,\quad \ \forall i,j\in V_1,\\
&x_j(t+\tau_1)=x_j(t),\qquad \forall j\in V_2.
\end{split}
\end{equation}
\begin{equation}\label{p2}
\begin{split}
Q_2^t:\ \ \ \max_{\textbf{x}_2(t+\tau_2)}&\ \ \lambda_2\big(\textbf{L}_G(\textbf{x}(t+\tau_2))\big)\\
\mathrm{s.t.}\ \ \ &||x_{ij}(t+\tau_2)||_2^2\ge d_2,\quad \ \forall i,j\in V_2,\\
&x_j(t+\tau_2)=x_j(t),\qquad \forall j\in V_1,
\end{split}
\end{equation}
where $V_1$ and $V_2$ denote the sets of nodes in $G_1$ and $G_2$, respectively, and $\tau_1\in\mathbb{R}_+$ and $\tau_2\in\mathbb{R}_+$ are the time constants indicating the update frequency of the players. Note that $\tau_1$ and $\tau_2$ can be different because of distinct sensing, detection, and response capabilities of two mobile networks. Furthermore, smaller $\tau_1$ and $\tau_2$ indicate a higher resilience of the network to attacks, since a higher update frequency leads to faster system recovery.

\textit{\textbf{Remark 3}:} Note that the network formation game is played repeatedly over time, and its structure is the same only with different initial conditions in terms of the robots' position. In addition, at each stage of play, the game captured by $Q_1^t$ and $Q_2^t$ can be characterized as a constrained \textit{potential game} due to the identical objective of two players \cite{monderer1996potential}.


\section{System Dynamics Discretization and Equilibrium Solution Concept }\label{s3}
To address the MRN formation problem formulated in Section \ref{s2}, we first analyze the fundamental network algebraic connectivity maximization problem (ACMP). For a given network $G$ with $n$ nodes, its algebraic connectivity can be represented by 
\begin{equation}\label{lambdainf}
\lambda_2(\textbf{L}_G(\textbf{x}))=\min_{||z||_2=1,z\perp \textbf{1}}{z^T\textbf{L}_G(\textbf{x})z}
\end{equation}
based on the Courant-Fischer theorem \cite{horn2012matrix}, where \textbf{1} is an $n$-dimensional vector with all-one entries. In addition, the function $\lambda_2(\textbf{L}_G(\textbf{x}))$ is concave in $\textbf{L}_G(\textbf{x})$.
Therefore, the ACMP $\max_{\textbf{x}}\lambda_2(\textbf{L}_G(\textbf{x}))$
gives rise to convex optimization approaches to deal with our MRN control problem. For the unconstrained ACMP, we present the following theorem.

\begin{theorem}[\cite{nagarajan2012algorithms}]
 The network algebraic connectivity maximization problem $\max_{\textbf{x}}\lambda_2(\textbf{L}_G(\textbf{x}))$ is equivalent to the following:
\begin{equation}\label{thm1}
\begin{split}
&\max_{\textbf{x},\ \alpha}\ \ \alpha\\
&\ \mathrm{s.t.}\ \ \ \textbf{L}_G(\textbf{x})\succeq \alpha\cdot (\textbf{I}_{n}-\frac{1}{n}{\textbf{1}\textbf{1}}^T),
\end{split}
\end{equation}
where $\alpha\in\mathbb{R}$, and $\textbf{I}_n$ is an $n$-dimensional identity matrix.
\end{theorem}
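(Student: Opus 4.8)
The plan is to show the two optimization problems share the same feasible value of $\alpha$ (interpreted as $\lambda_2$) and the same optimizers. I would first recall the variational characterization of algebraic connectivity from \eqref{lambdainf}, namely $\lambda_2(\textbf{L}_G(\textbf{x}))=\min_{\|z\|_2=1,\ z\perp\textbf{1}} z^T\textbf{L}_G(\textbf{x})z$, and then translate the semidefinite constraint in \eqref{thm1} into a statement about this minimum. The key observation is that the matrix $\textbf{P}:=\textbf{I}_n-\frac{1}{n}\textbf{1}\textbf{1}^T$ is the orthogonal projector onto the subspace $\textbf{1}^\perp$, so for any $z$ with $\|z\|_2=1$ we have $z^T\textbf{P}z = 1 - \frac{1}{n}(\textbf{1}^Tz)^2 \le 1$, with equality exactly when $z\perp\textbf{1}$, while for $z=\textbf{1}/\sqrt{n}$ we get $z^T\textbf{P}z=0=z^T\textbf{L}_G(\textbf{x})z$.

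The main argument is then a short equivalence. For the forward direction, suppose $\textbf{L}_G(\textbf{x})\succeq \alpha\textbf{P}$. Taking any unit $z\perp\textbf{1}$ gives $z^T\textbf{L}_G(\textbf{x})z \ge \alpha\, z^T\textbf{P}z = \alpha$; minimizing over such $z$ and invoking \eqref{lambdainf} yields $\lambda_2(\textbf{L}_G(\textbf{x}))\ge\alpha$. Hence any feasible $(\textbf{x},\alpha)$ for \eqref{thm1} satisfies $\alpha\le\lambda_2(\textbf{L}_G(\textbf{x}))$, so the optimal value of \eqref{thm1} is at most $\max_{\textbf{x}}\lambda_2(\textbf{L}_G(\textbf{x}))$. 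For the reverse direction, fix $\textbf{x}$ and set $\alpha=\lambda_2(\textbf{L}_G(\textbf{x}))$; I must verify $\textbf{L}_G(\textbf{x})-\alpha\textbf{P}\succeq 0$. Decompose an arbitrary $y\in\mathbb{R}^n$ as $y = c\,\textbf{1} + z$ with $z\perp\textbf{1}$. Since $\textbf{L}_G(\textbf{x})\textbf{1}=0$ and $\textbf{P}\textbf{1}=0$, both quadratic forms annihilate the $\textbf{1}$-component and the cross terms vanish (using symmetry of $\textbf{L}_G$ and $\textbf{P}$), so $y^T(\textbf{L}_G(\textbf{x})-\alpha\textbf{P})y = z^T\textbf{L}_G(\textbf{x})z - \alpha\|z\|_2^2 \ge 0$ by the definition of $\lambda_2$ as the minimum Rayleigh quotient over $\textbf{1}^\perp$. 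This shows $(\textbf{x},\lambda_2(\textbf{L}_G(\textbf{x})))$ is feasible for \eqref{thm1}, so the two optimal values coincide and maximizers correspond.

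I do not expect a serious obstacle here; the result is essentially a restatement of the Courant–Fischer/Ky Fan characterization packaged as an SDP, and it is attributed to \cite{nagarajan2012algorithms}. The one point requiring mild care is the handling of the $\textbf{1}$-direction: one must note that $\textbf{1}$ is a common null vector of $\textbf{L}_G(\textbf{x})$ and of $\textbf{P}$ (the latter because $\alpha\textbf{P}$ is used precisely to ``subtract off'' the spurious zero eigenvalue direction), so that positive semidefiniteness of $\textbf{L}_G(\textbf{x})-\alpha\textbf{P}$ on all of $\mathbb{R}^n$ reduces to positivity of the Rayleigh quotient on $\textbf{1}^\perp$. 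A secondary remark worth including is why $\alpha$ at optimum is automatically nonnegative — because $\textbf{L}_G(\textbf{x})\succeq 0$ always, so $\alpha=0$ (achieved by isolating all nodes, or any configuration) is feasible, hence the optimal $\alpha\ge 0$, consistent with $\lambda_2\ge 0$.
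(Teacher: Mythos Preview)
The paper does not actually prove this theorem: it is stated with a citation to \cite{nagarajan2012algorithms} and no proof is given in the text. Your argument is correct and is the standard one --- using that $\textbf{P}=\textbf{I}_n-\frac{1}{n}\textbf{1}\textbf{1}^T$ projects onto $\textbf{1}^\perp$ and that $\textbf{1}$ lies in the kernel of both $\textbf{L}_G(\textbf{x})$ and $\textbf{P}$, so the semidefinite constraint reduces exactly to the Rayleigh-quotient bound $\lambda_2(\textbf{L}_G(\textbf{x}))\ge\alpha$ via \eqref{lambdainf}. Since the paper offers nothing to compare against, there is no methodological difference to discuss; your proof simply supplies what the paper outsources to the reference.
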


\subsection{Discretization of the Dynamics}
For each update of the robotic network, it is essentially not continuous in time. Therefore, for simplicity, we discretize the formulated problems $Q_1^t$ and $Q_2^t$ in the following. First, we deal with the minimum distance constraint. By denoting $\mathcal{Z}_{ij}(t)=||x_{ij}(t)||_2^2$, and then differentiating $||x_{ij}(t)||_2^2$ with respect to the time, we obtain \cite{kim2006maximizing}
\begin{equation}\label{differentiate}
2\{\dot{x}_i(t)-\dot{x}_j(t)\}^T\{x_i(t)-x_j(t)\}=\dot{\mathcal{Z}}_{ij}(t).
\end{equation}
By using Euler's first order method
$
x(t)\rightarrow x(k),\ \ \dot{x}(t)\rightarrow \frac{x(k+1)-x(k)}{\Delta t},
$
where $\Delta t$ is the sample time, we rewrite \eqref{differentiate} as
\begin{equation}\label{discretedistance}
\begin{split}
2\{{x}_i(k+1)-{x}_j(k+1)\}^T\{x_i(k)-x_j(k)\}&\\
={\mathcal{Z}}_{ij}(k+1)+&{\mathcal{Z}}_{ij}(k).
\end{split}
\end{equation}

Similarly, differentiating and discretizing weight $w_{ij}$ yield
\begin{equation}\label{weight}
w_{ij}(k+1)=w_{ij}(k)+\frac{\partial f(||x_{ij}||_2)}{\partial ||x_{ij}||_2}\biggl\rvert_k \{{x}_{ij}(k+1)-{x}_{ij}(k)\}.
\end{equation}
Hence, we can obtain a discrete Laplacian matrix $\textbf{L}_G\big(\textbf{x}(k)\big)$ by using \eqref{weight} which is presented in Section \ref{reformulation}.

\subsection{Problem Reformulation}\label{reformulation}
Based on \eqref{thm1} and \eqref{discretedistance}, and for given initial position vectors $\textbf{x}_2(k)$ and $\textbf{x}_1(k)$ for $P_1$ and $P_2$, respectively, we can reformulate the problems $Q_1^t$ and $Q_2^t$ as follows:
\begin{equation}\label{ap1}
\begin{split}
\widetilde Q_1^k:\ \ \ &\max_{\textbf{x}_1(k+1),\ \alpha_1(k+1)}\quad \alpha_1(k+1)\\
\mathrm{s.t.}\ \ \ &\textbf{L}_G(k+1)\succeq \alpha_1(k+1)\cdot (I_{n}-\frac{1}{n}{\textbf{1}\textbf{1}}^T),\\
&2\{x_i(k+1)-x_j(k+1)\}^T\{x_i(k)-x_j(k)\}\\
&\qquad \ \ =\mathcal{Z}_{ij}(k+1)+\mathcal{Z}_{ij}(k),\\
&||x_{ij}(k+1)||_2^2\ge d_1,\ \ \ \forall i,j\in V_1,\\
&x_j(k+1)=x_j(k),\ \ \ \ \forall j\in V_2,\\
\end{split}
\end{equation}
\begin{equation}\label{ap2}
\begin{split}
\widetilde Q_2^k:\ \ \ &\max_{\textbf{x}_2(k+1),\ \alpha_2(k+1)}\quad\alpha_2(k+1)\\
\mathrm{s.t.}\ \ \ &\textbf{L}_G(k+1)\succeq \alpha_2(k+1)\cdot (I_{n}-\frac{1}{n}{\textbf{1}\textbf{1}}^T),\\
&2\{x_i(k+1)-x_j(k+1)\}^T\{x_i(k)-x_j(k)\}\\
&\qquad \ \ =\mathcal{Z}_{ij}(k+1)+\mathcal{Z}_{ij}(k),\\
&||x_{ij}(k+1)||_2^2\ge d_2,\ \ \ \forall i,j\in V_2,\\
&x_j(k+1)=x_j(k),\ \ \ \ \forall j\in V_1,
\end{split}
\end{equation}
where $\alpha_1(k+1)$ and $\alpha_2(k+1)$ are the scalar objectives of $\widetilde Q_1^k$ and $\widetilde Q_2^k$, respectively.

In addition, we obtain the discrete Laplacian matrix $\textbf{L}_G(k+1)$ by using \eqref{weight}, and its entries $l_{ij}^G (k+1)$ are
\begin{equation*}
\begin{split}
&l_{ij}^G (k+1)=\\
&\left\{
\begin{aligned}
&-w_{ij}(k+1),\quad\ \ \mathrm{if}\ i\neq j,\ (i,j)\in E_1\cup E_2;\\
&-\tilde w_{ij}(k+1),\quad \ \ \mathrm{if}\ i\in V_1,\ j\in V_2,\ \mathrm{or}\ j\in V_1,\ i\in V_2;\\
&\sum_{s\neq i,s\in V_1}\hspace{-2mm} w_{is}(k+1)+\sum_{q\neq i,q\in V_2}\hspace{-2mm} \tilde w_{iq}(k+1),\ \ \mathrm{if}\ \ i=j\in V_1;\\
&\sum_{s\neq i, s\in V_2}\hspace{-2mm} w_{is}(k+1)+\sum_{q\neq i, q\in V_1}\hspace{-2mm} \tilde w_{iq}(k+1),\ \ \mathrm{if}\ \ i=j\in V_2;
\end{aligned}
\right.
\end{split}
\end{equation*}
where $w_{ij},\ \forall (i,j)\in E_1\cup E_2$, represent the weight of interlinks inside $G_1$ and $G_2$, and $\tilde w_{ij},\ \forall (i,j)\in E_{12}$, denote the weight of intra-links connecting $G_1$ and $G_2$.

\subsection{Nash Equilibrium of the Game}
For the formulated discretized MRN formation game, a natural solution concept is Nash equilibrium (NE). Before presenting the formal definition of NE, we first analyze the impact of the players' action on the network at each step. Specifically, after $P_1$ takes his action at step $k$, $G_1$ and $G_{12}$ are reconfigured, where $G_{12}$ is the network between $G_1$ and $G_2$. We denote network $G_1$ and $G_{12}$ at stage $k$ as $G_{1,k}$ and $G_{12,k}$, respectively. For simplicity, we further define $\widetilde G_{12,k}:=G_{1,k} \cup G_{12,k}$, which is a shorthand notation for the merged network. Then, network $G_k$ can be expressed as $G_k= \widetilde G_{12,k}\ \cup G_{2,k}$. Similarly, after $P_2$ updates network $G_2$ at step $k$, the whole network $G_k$ becomes $ G_k=\widetilde G_{21,k}\ \cup G_{1,k}$, where $\widetilde G_{21,k}:=G_{2,k} \cup G_{12,k}$. Then, the formal definition of Nash equilibrium (NE) which depends on the \textit{position} of robots is as follows.


\begin{definition}[Nash Equilibrium]
The Nash equilibrium solution to the discretized multi-level robotic networks formation game is a strategy profile $\textbf{x}^*$, where $\textbf{x}^*=(\textbf{x}_1^*,{\textbf{x}_2^*})\in{\textbf{X}}$, that satisfy
\begin{equation*}\label{NE_discrete}
\begin{split}
\lambda_2\big(\textbf{L}_{ G_{k}}(\textbf{x}_1^*,\textbf{x}_2^*) \big)\ge \lambda_2\big(\textbf{L}_{G_{k}}(\textbf{x}_1,\textbf{x}_2^*) \big),\\
\lambda_2\big(\textbf{L}_{ G_{k}}(\textbf{x}_1^*,\textbf{x}_2^*) \big)\ge \lambda_2\big(\textbf{L}_{G_{k}}(\textbf{x}_1^*,\textbf{x}_2) \big),
\end{split}
\end{equation*}
for $\forall \textbf{x}_1\in \textbf{X}_1$ and $\forall \textbf{x}_2 \in \textbf{X}_2$, where $k$ denotes the time step, and $\textbf{x}=(\textbf{x}_1,{\textbf{x}_2})$ is defined in Section \ref{game_formulation}.
\end{definition}

Note that at the NE point, no player can individually increase the global network connectivity by reconfiguring their robotic network, and the two-level MRN possesses an equilibrium configuration.

\section{Semidefinite Programming and Iterative Algorithm}\label{semi}
In this section, we first derive a semidefinite programming (SDP) approach to address the discretized optimization problems 
$\widetilde Q_1^k$ and $\widetilde Q_2^k$, and then design an iterative algorithm to find the NE solution to the formulated MRN formation game.

\subsection{Semidefinite Programming Formulation}
Notice that in $\widetilde Q_1^k$ and $\widetilde Q_2^k$, the minimum distance constraints $||x_{ij}(k+1)||_2^2\geq d_1,\ \forall i,j\in V_1$, and  $||x_{ij}(k+1)||_2^2\geq d_2,\ \forall i,j\in V_2$, are \textit{nonconvex}. To address this issue, one method is to regard the distance $||x_{ij}(k+1)||_2^2=\mathcal{Z}_{ij}(k+1)$ as a new variable, and solve problems $\widetilde Q_1^k$ and $\widetilde Q_2^k$ with respect to unknowns $\mathcal{Z}_{ij}(k+1)$ and $\textbf{x}(k+1)$ jointly. In this way, $\widetilde Q_1^k$ and $\widetilde Q_2^k$ become convex problems. However, due to the coupling between the robots position and the distance vectors, solving $\widetilde Q_1^k$ and $\widetilde Q_2^k$ via merely adding new variables will yield inconsistency between the obtained solutions $\textbf{x}(k+1)$ and $\mathcal{Z}_{ij}(k+1),\ \forall i,j\in V$. Therefore, further considerations are needed, and we first present the definition of Euclidean distance matrix as follows. 
\begin{definition}[Euclidean Distance Matrix]
Given the positions of a set of $n$ points denoted by $\mathcal{N}:=\{x_1,...,x_n\}$, the Euclidean distance matrix representing the points spacing is defined as $$\textbf{D}:=[d_{ij}]_{i,j\in\mathcal{N}},\quad d_{ij}=||x_i-x_j||_2^2.$$
\end{definition}
A critical property of the Euclidean distance matrix is summarized in the following theorem.

\begin{theorem}[\cite{dattorro2010convex}]\label{EDM}
A matrix $\textbf{D}=[d_{ij}]_{i,j=1,...,n}$ is an Euclidean distance matrix if and only if 
\begin{equation}\label{sdp_EDM}
-\textbf{CDC}\succeq 0,\ \mathrm{and}\ d_{ii}=0,\ i=1,...,n,
\end{equation}
where $\textbf{C}:=\textbf{I}_n - \frac{1}{n}\textbf{11}^T$.
\end{theorem}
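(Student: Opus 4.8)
The plan is to prove both implications of the ``if and only if'' separately, using the fundamental relation between an EDM and the Gram matrix of a centered point configuration. First I would establish notation: given points $x_1,\dots,x_n\in\mathbb{R}^r$ collected as columns of a matrix $X\in\mathbb{R}^{r\times n}$, write $d_{ij}=\|x_i-x_j\|_2^2=\|x_i\|_2^2-2x_i^Tx_j+\|x_j\|_2^2$. Letting $p\in\mathbb{R}^n$ be the vector with $p_i=\|x_i\|_2^2$ and $\textbf{G}=X^TX$ the Gram matrix, this reads $\textbf{D}=p\textbf{1}^T+\textbf{1}p^T-2\textbf{G}$ in matrix form, and clearly $d_{ii}=0$ for all $i$.

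For the ``only if'' direction, suppose $\textbf{D}$ is an EDM arising from points $x_1,\dots,x_n$. Apply the centering operator $\textbf{C}=\textbf{I}_n-\frac1n\textbf{1}\textbf{1}^T$ on both sides of $\textbf{D}=p\textbf{1}^T+\textbf{1}p^T-2\textbf{G}$. Since $\textbf{C}\textbf{1}=\textbf{0}$ and $\textbf{1}^T\textbf{C}=\textbf{0}^T$, the rank-one terms $p\textbf{1}^T$ and $\textbf{1}p^T$ are annihilated, leaving $-\textbf{C}\textbf{D}\textbf{C}=2\,\textbf{C}\textbf{G}\textbf{C}=2(\textbf{C}X^T)(X\textbf{C})=2(X\textbf{C})^T(X\textbf{C})\succeq 0$, because any matrix of the form $M^TM$ is positive semidefinite. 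The condition $d_{ii}=0$ is immediate. For the ``if'' direction, assume $-\textbf{C}\textbf{D}\textbf{C}\succeq 0$ and $d_{ii}=0$ for all $i$. Set $\textbf{G}:=-\frac12\textbf{C}\textbf{D}\textbf{C}$, which is symmetric PSD by hypothesis, so it admits a factorization $\textbf{G}=X^TX$ for some $X\in\mathbb{R}^{r\times n}$ (e.g.\ via the spectral decomposition, with $r=\operatorname{rank}\textbf{G}$); take $x_1,\dots,x_n$ to be the columns of $X$. It then remains to verify that $\|x_i-x_j\|_2^2=d_{ij}$ for all $i,j$. Expanding, $\|x_i-x_j\|_2^2=\textbf{G}_{ii}+\textbf{G}_{jj}-2\textbf{G}_{ij}$, and one computes $\textbf{G}_{ii}+\textbf{G}_{jj}-2\textbf{G}_{ij}=-\frac12\big[(\textbf{C}\textbf{D}\textbf{C})_{ii}+(\textbf{C}\textbf{D}\textbf{C})_{jj}-2(\textbf{C}\textbf{D}\textbf{C})_{ij}\big]$; expanding $\textbf{C}\textbf{D}\textbf{C}=(\textbf{I}-\frac1n\textbf{1}\textbf{1}^T)\textbf{D}(\textbf{I}-\frac1n\textbf{1}\textbf{1}^T)$ entrywise and using $d_{ii}=0$, the row-sum and total-sum correction terms cancel in the combination $(\cdot)_{ii}+(\cdot)_{jj}-2(\cdot)_{ij}$, yielding exactly $-\frac12(-2d_{ij})=d_{ij}$.

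I expect the main obstacle to be the bookkeeping in the ``if'' direction: carefully expanding the double-centered matrix $\textbf{C}\textbf{D}\textbf{C}$ into its four terms (the $\textbf{D}$ term, two terms involving row/column averages of $\textbf{D}$, and the grand-average term), and then checking that when one forms the particular linear combination $\textbf{G}_{ii}+\textbf{G}_{jj}-2\textbf{G}_{ij}$ all the averaging artifacts vanish and the diagonal-zero hypothesis is exactly what makes the leftover equal $d_{ij}$. The ``only if'' direction and the PSD factorization step are routine. Throughout I would rely only on elementary linear algebra — the identity $\textbf{C}\textbf{1}=\textbf{0}$, the fact that $M^TM\succeq0$, and the existence of a PSD square-root/Cholesky-type factorization — so no results beyond those already invoked in the excerpt are needed.
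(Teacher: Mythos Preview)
Your proof is correct and follows the classical Schoenberg argument: express the squared-distance matrix as $\textbf{D}=p\textbf{1}^T+\textbf{1}p^T-2\textbf{G}$, exploit $\textbf{C}\textbf{1}=\textbf{0}$ to kill the rank-one terms, and for the converse recover a Gram matrix from $-\tfrac12\textbf{C}\textbf{D}\textbf{C}$ and check entrywise that the squared distances match. The bookkeeping you flag in the ``if'' direction does work out exactly as you describe, using $d_{ii}=0$ (and, implicitly, symmetry of $\textbf{D}$ so that row and column averages coincide).

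There is nothing to compare against in the paper itself: the theorem is quoted from \cite{dattorro2010convex} and stated without proof, serving only as a tool to convexify the distance constraints in $\overline Q_1^k$ and $\overline Q_2^k$. Your argument is precisely the standard one found in that reference.
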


Note that \eqref{sdp_EDM} is a necessary and sufficient condition that ensures $\textbf{D}$ an  Euclidean distance matrix. In addition, the inequality and equality in \eqref{sdp_EDM} are both convex. Therefore, Theorem \ref{EDM} provides an approach to avoid the inconsistency between the robots position and distance vectors when they are treated as independent variables. In specific, denote $\textbf{Z}=[\mathcal{Z}_{ij}]_{i,j\in V}$, $\textbf{C}=\textbf{I}_n - \frac{1}{n}\textbf{11}^T$, and we can further reformulate problems $\widetilde Q_1^k$ and $\widetilde Q_2^k$ as
\begin{equation}
\begin{split}
\overline Q_1^k:\ \ \ &\max_{\textbf{x}_1(k+1),\ \textbf{Z}(k+1),\ \alpha_1(k+1)}\ \ \alpha_1(k+1)\\
\mathrm{s.t.}\ \ \ &\textbf{L}_G(k+1)\succeq \alpha_1(k+1) \textbf{C},\\
&2\{x_i(k+1)-x_j(k+1)\}^T\{x_i(k)-x_j(k)\}\\
&\qquad \ \ =\mathcal{Z}_{ij}(k+1)+\mathcal{Z}_{ij}(k),\\
&\mathcal{Z}_{ij}(k+1)\ge d_1,\ \ \ \ \ \forall i,j\in V_1,\\
& -\textbf{CZ}(k+1)\textbf{C}\succeq 0,\ \mathcal{Z}_{ii}(k+1)=0,\ i\in V,\\
&x_j(k+1)=x_j(k),\ \ \ \ \forall j\in V_2,\\
\end{split}
\end{equation}
\begin{equation}
\begin{split}
\overline Q_2^k:\ \ \ &\max_{\textbf{x}_2(k+1),\ \textbf{Z}(k+1),\ \alpha_2(k+1)}\ \ \alpha_2(k+1)\\
\mathrm{s.t.}\ \ \ &\textbf{L}_G(k+1)\succeq \alpha_2(k+1) \textbf{C},\\
&2\{x_i(k+1)-x_j(k+1)\}^T\{x_i(k)-x_j(k)\}\\
&\qquad \ \ =\mathcal{Z}_{ij}(k+1)+\mathcal{Z}_{ij}(k),\\
&\mathcal{Z}_{ij}(k+1)\ge d_2,\ \ \ \ \ \forall i,j\in V_2,\\
& -\textbf{CZ}(k+1)\textbf{C}\succeq 0,\ \mathcal{Z}_{ii}(k+1)=0,\ i\in V,\\
&x_j(k+1)=x_j(k),\ \ \ \ \forall j\in V_1.
\end{split}
\end{equation}

Hence, $\overline Q_1^k$ and $\overline Q_2^k$ become convex and are semidefinite programming problems which can be solved efficiently.

\subsection{Iterative Algorithm}
After obtaining the SDP problems $\overline Q_1^k$ and $\overline Q_2^k$, we aim to find the solution that results in an equilibrium network configuration. In the network formation game, $P_1$ controls robots in $G_1$ and reconfigures  the network by solving the optimization problem $\overline Q_1^k$ to obtain a new position of each robot. $P_2$  controls robots in network $G_2$ in a similar way by solving $\overline Q_2^k$. Note that the players' action at the current step can be seen as a best-response to the network at the previous step. Besides, the update frequency of each player in the discrete time measure is needed to be determined. For given $\tau_1$ and $\tau_2$ in the continuous time space, we can obtain their update frequencies by normalizing them into integers denoted by $s_1$ and $s_2$, respectively. Then, $P_1$ and $P_2$ reconfigure their robots for every $s_1$ and $s_2$ time intervals which can also be interpreted as the frequency of solving $\overline Q_1^k$ and $\overline Q_2^k$, respectively. Since both players maximize the global network connectivity at every update step, then one approach to find the equilibrium solution is to address $\overline Q_1^k$ and $\overline Q_2^k$ iteratively by two players until the yielding MRN possesses the same topology, i.e., $P_1$ and $P_2$ cannot increase the network connectivity further through relocating their robots.

\subsection{Feasibility and Convergence}
Before solving the problems $\overline Q_1^k$ and $\overline Q_2^k$, we should first analyze their feasibility, and we have the following lemma.
\begin{lemma}
For a given initial multi-level MRN where the distance between robots satisfies the predefined minimum distance constraint, then the game problems $\overline Q_1^k$ and $\overline Q_2^k$ are always feasible.
\end{lemma}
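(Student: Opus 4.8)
The plan is to exhibit an explicit feasible point of $\overline Q_1^k$ (and, symmetrically, $\overline Q_2^k$) by keeping every robot at its current position, and then to verify that this "no-move" configuration satisfies every constraint of the semidefinite program. Concretely, I would set $\textbf{x}_1(k+1)=\textbf{x}_1(k)$ (recall $\textbf{x}_2(k+1)=\textbf{x}_2(k)$ is already forced by the last constraint of $\overline Q_1^k$), define $\mathcal{Z}_{ij}(k+1)=\mathcal{Z}_{ij}(k)=\|x_{ij}(k)\|_2^2$ for all $i,j\in V$, and choose $\alpha_1(k+1)=\lambda_2(\textbf{L}_G(k))$, i.e. the current algebraic connectivity. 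The claim is that the tuple $(\textbf{x}_1(k),\textbf{Z}(k),\lambda_2(\textbf{L}_G(k)))$ lies in the feasible set, which immediately yields feasibility.

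The verification proceeds constraint by constraint. First, the linearized distance-update equation $2\{x_i(k+1)-x_j(k+1)\}^T\{x_i(k)-x_j(k)\}=\mathcal{Z}_{ij}(k+1)+\mathcal{Z}_{ij}(k)$ becomes, under the no-move choice, $2\|x_{ij}(k)\|_2^2 = \mathcal{Z}_{ij}(k+1)+\mathcal{Z}_{ij}(k)$, which holds exactly because both summands on the right equal $\|x_{ij}(k)\|_2^2=\mathcal{Z}_{ij}(k)$; this is the discretized identity \eqref{discretedistance} evaluated at a stationary trajectory. Second, the minimum-distance constraints $\mathcal{Z}_{ij}(k+1)\ge d_1$ for $i,j\in V_1$ reduce to $\|x_{ij}(k)\|_2^2\ge d_1$, which is precisely the hypothesis that the \emph{given initial} multi-level MRN satisfies the predefined minimum distance constraint. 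Third, the Euclidean-distance-matrix condition $-\textbf{CZ}(k+1)\textbf{C}\succeq 0$ with $\mathcal{Z}_{ii}(k+1)=0$ holds by Theorem~\ref{EDM}, since $\textbf{Z}(k+1)=\textbf{Z}(k)=[\|x_i(k)-x_j(k)\|_2^2]_{i,j\in V}$ is by construction the Euclidean distance matrix of the actual point set $\{x_1(k),\dots,x_n(k)\}$ and therefore automatically satisfies \eqref{sdp_EDM}. Fourth, the pinning constraint $x_j(k+1)=x_j(k)$ for $j\in V_2$ holds trivially.

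It remains to treat the semidefinite constraint $\textbf{L}_G(k+1)\succeq \alpha_1(k+1)\textbf{C}$. Under the no-move choice the weights are unchanged — formula \eqref{weight} gives $w_{ij}(k+1)=w_{ij}(k)$ since $x_{ij}(k+1)-x_{ij}(k)=0$ — so $\textbf{L}_G(k+1)=\textbf{L}_G(k)$, and the constraint reads $\textbf{L}_G(k)\succeq \lambda_2(\textbf{L}_G(k))\,\textbf{C}$. This is exactly the spectral characterization underlying Theorem~1: writing $\textbf{C}=\textbf{I}_n-\tfrac1n\textbf{1}\textbf{1}^T$, both $\textbf{L}_G(k)$ and $\textbf{C}$ annihilate $\textbf{1}$ and act on $\textbf{1}^\perp$, where $\textbf{C}$ is the identity; hence on $\textbf{1}^\perp$ the inequality says $z^T\textbf{L}_G(k)z\ge \lambda_2(\textbf{L}_G(k))\|z\|_2^2$, which is the Courant–Fischer bound \eqref{lambdainf}, while on $\mathrm{span}(\textbf{1})$ both sides vanish. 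So the matrix inequality holds with $\alpha_1(k+1)=\lambda_2(\textbf{L}_G(k))$, completing the check for $\overline Q_1^k$; the argument for $\overline Q_2^k$ is identical with $d_1$ replaced by $d_2$ and the roles of $V_1,V_2$ swapped.

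The only subtlety — and the one place the argument is more than bookkeeping — is making sure the stationary trajectory is genuinely consistent with \emph{all} the coupled relaxed variables simultaneously: the distance variables $\mathcal{Z}_{ij}$, the positions $\textbf{x}$, and the Laplacian built from the linearized weights must all be mutually compatible at the no-move point. Theorem~\ref{EDM} is exactly what removes the inconsistency worry, since it certifies that the true distance matrix of the current robots is admissible; and the linearization in \eqref{discretedistance} and \eqref{weight} is exact at zero displacement, so no approximation error is incurred. Thus the no-move configuration is simultaneously feasible for the position block, the distance block, and the SDP block, which establishes the lemma. I would also remark that this shows feasibility is preserved along the iterations: if the MRN at step $k$ satisfies the minimum-distance constraints, the feasible "stay put" option guarantees the solver can at least retain the current network, so the constraints continue to hold at step $k+1$, and the argument can be applied inductively.
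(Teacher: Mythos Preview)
The paper states Lemma~1 without proof, so there is no published argument to compare against. Your proposal is correct and is the natural way to establish feasibility: exhibit the ``stay-put'' point $(\textbf{x}(k),\textbf{Z}(k),\lambda_2(\textbf{L}_G(k)))$ and check each constraint, which you do carefully and accurately (the appeals to Theorem~\ref{EDM} for the EDM block and to Theorem~1/Courant--Fischer for the SDP block are exactly right, and one could even take $\alpha_1(k+1)=0$ to make the last step trivial).

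One small caveat on your final inductive remark: the solver at step $k$ enforces $\mathcal{Z}_{ij}(k+1)\ge d_\gamma$ on the \emph{relaxed} distance variable, not on the true squared distance $\|x_{ij}(k+1)\|_2^2$; the EDM and linearized coupling constraints do not force these to coincide exactly. So strictly speaking the induction ``if the configuration at step $k$ satisfies the minimum-distance constraint then so does the configuration at step $k+1$'' requires an extra word about why the optimizer's output positions (not just the $\mathcal{Z}$ variables) respect the bound. The paper does not address this either, and it does not affect the feasibility claim itself, but it is worth flagging if you want the inductive version to be airtight.
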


When $\overline Q_1^k$ and $\overline Q_2^k$ are feasible at each update step, another essential property is the convergence of the proposed iterative algorithm. Without loss of generality, we assume that two players will not update at the same step which can be easily achieved by normalizing the update frequency and choosing the initial update step of two players appropriately. Then, the convergence result is summarized in Theorem \ref{convergence}.

\begin{theorem}\label{convergence}
The iterative algorithm converges to a Nash equilibrium point asymptotically.
\end{theorem}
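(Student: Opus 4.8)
The plan is to exploit the potential-game structure identified in Remark~3 together with the monotone, bounded nature of the sequence of algebraic connectivities generated by the iterations. First I would observe that the common objective $\lambda_2\big(\textbf{L}_G(\textbf{x})\big)$ serves as an exact potential function for the stage game, so the best-response updates of $P_1$ (solving $\overline Q_1^k$) and $P_2$ (solving $\overline Q_2^k$) each weakly increase $\lambda_2\big(\textbf{L}_G(\textbf{x})\big)$. By Lemma~1 each subproblem is feasible, and a player can always leave its robots in place (a feasible point by the minimum-distance hypothesis), so the optimal value of $\overline Q_\gamma^k$ is never below the current connectivity; hence the scalar sequence $\{\lambda_2(\textbf{L}_{G_k})\}_k$ is nondecreasing. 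Next I would argue this sequence is bounded above: the weights $w_{ij},\tilde w_{ij}$ lie in $[0,1]$ by \eqref{law}, so the entries of $\textbf{L}_G$ are uniformly bounded, and therefore $\lambda_2(\textbf{L}_G)\le \lambda_n(\textbf{L}_G)\le \mathrm{trace}(\textbf{L}_G)$ is bounded by a constant depending only on $n$. A bounded monotone sequence converges, so $\lambda_2(\textbf{L}_{G_k})\to \lambda^\star$ for some finite $\lambda^\star$.

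The second part of the argument is to convert convergence of the objective values into convergence to a Nash equilibrium in the sense of Definition~2. I would use the assumption, stated just before the theorem, that the two players never update on the same step, so along the subsequence of steps at which $P_1$ acts, $\textbf{x}_2$ is held fixed and $\textbf{x}_1$ is chosen to maximize $\lambda_2(\textbf{L}_{G_k}(\cdot,\textbf{x}_2))$; symmetrically for $P_2$. Since the action sets $\textbf{X}_1,\textbf{X}_2$ are compact (robots live in a bounded region with the closed EDM/minimum-distance constraints) and $\lambda_2(\textbf{L}_G(\cdot))$ is continuous in $\textbf{x}$, I would pass to a convergent subsequence $\textbf{x}^{(k_j)}\to\textbf{x}^\star=(\textbf{x}_1^\star,\textbf{x}_2^\star)$. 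Because the increments $\lambda_2(\textbf{L}_{G_{k+1}})-\lambda_2(\textbf{L}_{G_k})\to 0$, the gain any single player can extract from a unilateral best response vanishes in the limit; taking limits in the best-response inequalities and using continuity of $\lambda_2$ and closedness of the constraint sets yields, for every $\textbf{x}_1\in\textbf{X}_1$ and $\textbf{x}_2\in\textbf{X}_2$,
\begin{equation*}
\lambda_2\big(\textbf{L}_{G_k}(\textbf{x}_1^\star,\textbf{x}_2^\star)\big)\ge \lambda_2\big(\textbf{L}_{G_k}(\textbf{x}_1,\textbf{x}_2^\star)\big),\qquad \lambda_2\big(\textbf{L}_{G_k}(\textbf{x}_1^\star,\textbf{x}_2^\star)\big)\ge \lambda_2\big(\textbf{L}_{G_k}(\textbf{x}_1^\star,\textbf{x}_2)\big),
\end{equation*}
which is exactly the NE condition of Definition~2; since the stage game is structurally the same at every $k$, $\textbf{x}^\star$ is an equilibrium configuration.

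The main obstacle I anticipate is the step linking ``objective increments tend to zero'' to ``the best-response gain at the limit point is zero.'' The subtlety is that the best-response operators for $P_1$ and $P_2$ are set-valued (the SDPs $\overline Q_\gamma^k$ need not have unique optimizers) and a priori only upper semicontinuous, so one must be careful that along the chosen subsequence the player who acts is the ``right'' one and that the limit of best responses is a best response to the limit — this is a closed-graph/Berge-maximum-theorem argument that I would need to state explicitly, rather than a routine calculation. A secondary point worth addressing is that ``asymptotic'' convergence here means convergence of the connectivity value (and of a subsequence of configurations) rather than finite-time termination; if one additionally wants the network \emph{topology} to stabilize, as the informal discussion preceding the theorem suggests, that requires the weaker conclusion that the adjacency pattern (which edges have nonzero weight) is eventually constant, which I would either prove under a genericity assumption or explicitly leave as the intended reading of ``yields an equilibrium MRN.''
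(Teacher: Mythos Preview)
Your proposal follows the same two-step skeleton as the paper's proof: (i) the best-response updates produce a nondecreasing, bounded sequence of algebraic connectivities which therefore converges (the paper invokes the spectral bound $\lambda_2\le n-1$ rather than your trace bound, and cites the monotone convergence theorem), and (ii) the limiting configuration must satisfy the NE inequalities of Definition~1. Your treatment of step (ii) via compactness, subsequences, and a Berge-type closed-graph argument is in fact more careful than the paper's, which simply asserts that the connectivity limit is attained at some finite step $l$ by a pair $(\bar{\textbf{x}}_1,\bar{\textbf{x}}_2)$ and argues by contradiction that these must be mutual best responses; the obstacle you flag---passing from vanishing improvement increments to a genuine NE at the limit---is precisely the point the paper glosses over.
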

\begin{proof}
First, remind that both $\overline Q_1^k$ and $\overline Q_2^k$ maximize the algebraic connectivity of the global network, and thus the resulting $\alpha_i(k+1)$, $i\in\{1,2\}$, is no less than the one obtained from the previous update step which yields a non-decreasing network connectivity sequence $\lambda_2$. In addition, for a network with $n$ nodes, its algebraic connectivity is upper bounded by $n-1$, \cite{godsil2013algebraic}. Thus, based on the monotone convergence theorem \cite{ganter2012formal}, we can conclude that the network connectivity sequence converges asymptotically. Denote the actions of two players that achieve the network connectivity limit as $\bar{\textbf{x}}_1$ and $\bar{\textbf{x}}_2$ at some step $l$, and then, we obtain 
$
\lambda_2\big(\textbf{L}_{ G_{l}}(\bar{\textbf{x}}_1,\bar{\textbf{x}}_2) \big)\ge \lambda_2\big(\textbf{L}_{G_{l}}(\textbf{x}_1,\bar{\textbf{x}}_2) \big),\
\lambda_2\big(\textbf{L}_{ G_{l}}(\bar{\textbf{x}}_1,\bar{\textbf{x}}_2) \big)\ge \lambda_2\big(\textbf{L}_{G_{l}}(\bar{\textbf{x}}_1,\textbf{x}_2) \big),
$
for $\forall \textbf{x}_1\in \textbf{X}_1$ and $\forall \textbf{x}_2 \in \textbf{X}_2$. Otherwise, $\bar{\textbf{x}}_1$ and $\bar{\textbf{x}}_2$ do not result in the network connectivity limit. Obviously, the strategy pair $(\bar{\textbf{x}}_1,\bar{\textbf{x}}_2)$ satisfies the NE Definition \ref{NE_discrete} which indicates that the proposed iterative algorithm converges to a NE point asymptotically.
\end{proof}

\textit{\textbf{Remark 4}:} A typical example of the iterative algorithm is called \textit{alternating update} in which $P_1$ and $P_2$ have the same update frequency and reconfigure the MRN sequentially.

\section{Adversarial Attacks in the Networks}\label{s4}
Robots in the mobile networks are prone to malicious attacks \cite{xu2006jamming,
karlof2003secure, tseng2011survey,
khokhar2008review}. Thus, their secure and resilient control is essential. In this section, we first present three main types of adversarial attacks to the mobile network including the global positioning system (GPS) spoofing attack, targeted jamming attack and denial-of-service (DoS) attack, and then analyze their impacts on the network performance.

\subsection{GPS Spoofing Attack}
A GPS spoofing attack aims to deceive a GPS receiver in terms of the object's position, velocity and time by generating counterfeit GPS signals \cite{akos2012s}. In \cite{kerns2014unmanned}, the authors have demonstrated that UAVs can be controlled by the attackers and go to a wrong position through the GPS spoofing attack. In our MRN, we consider the scenario that the physical movement of a robot is constrained due to the attacks which can be realized by adding a disruptive position signal to the robot's real control command. Therefore, through the GPS spoofing attack, the mobile robot cannot move but still maintains its communications with other robots in the network. In addition, we assume that the attack cannot last forever but for a period of $g_a$ in the discrete time measure which is reasonable, since the resource of an attacker is limited, and the abnormal/unexpected behavior of the other unattacked robots resulting from the spoofing attack can be detected by the network administrator.

In the MRN, if robot $i$ is compromised by the spoofing attacker at time step $k_1$, and the attack lasts for $g_a$ time steps, then this scenario can be captured by adding the following constraint to the problems $\overline Q_1^k$ and $\overline Q_2^k$:
\begin{equation}
x_i(k+1)=x_i(k),\quad k=k_1,...,k_1+g_a-1.
\end{equation}

The attacked robot is usually randomly chosen. To evaluate the impact of the attack, we choose the robot that has the maximum degree which is denoted by ${i}_{max}$. Then, we obtain
\begin{align*}
{i}_{max}\in\ \mathrm{arg}\ \max_i\sum_{j\in N_i}w_{ij},
\end{align*}
where $N_i$ is the set of nodes that are connected to node $i$.

\subsection{Targeted Jamming Attack}
In wireless communication networks, one class of adversarial event is the jamming attack which can be launched by the attackers through injecting a huge amount of false data into the communication links \cite{karlof2003secure,zhu2011eavesdropping}. In this attack scenario, we consider the targeted jamming attack which means that the attacker jams a certain wireless communication channel between mobile robots which leads to a consequence of \textit{link removal} in the network.

To model this attack, denote the network as $\widetilde{G}(i,j)=(V,E\setminus(i,j))$ after removing a link $(i,j)\in E$ from network $G$, then, we have $\widetilde{\textbf{L}}=\textbf{L}-\Delta \textbf{L}$ and $\Delta \textbf{L}=\Delta \textbf{D}-\Delta \textbf{A}$, where $\Delta \textbf{D}$ and $\Delta \textbf{A}$ are the decreased degree and adjacency matrices, respectively. By using equation \eqref{laplacian2}, we obtain $\Delta \textbf{D}$ and $\Delta \textbf{A}$ as follows:
\begin{equation}
\begin{split}\label{delta}
\Delta \textbf{D}=\textbf{e}_i \tilde{\textbf{e}}_{i,j}^{T}+\textbf{e}_j \tilde{\textbf{e}}_{j,i}^T,\\
\Delta \textbf{A}=\textbf{e}_i \tilde{\textbf{e}}_{j,i}^T+\textbf{e}_j \tilde{\textbf{e}}_{i,j}^T,
\end{split}
\end{equation}
where $\textbf{e}_i$ and $\tilde{\textbf{e}}_{i,j}$ are zero vectors except the $i$-th element equaling to 1 and $w_{ij}$, respectively, and similar for $\textbf{e}_j$ and $\tilde{\textbf{e}}_{j,i}$. Denote the Laplacian matrix of $\widetilde{G}(i,j)$ as $\widetilde{\textbf{L}}(i,j)$, and by using equations in \eqref{delta}, we have
\begin{equation}\label{linkattackL}
\widetilde{\textbf{L}}(i,j)=\textbf{L}-\big(\textbf{e}_i-\textbf{e}_j\big)\big(\tilde{\textbf{e}}_{i,j}- \tilde{\textbf{e}}_{j,i}\big)^T.
\end{equation}

Similar to the GPS spoofing attack, the targeted jamming attack lasts for $g_b$ time steps. In order to incorporate this attack into the mobile robotic networks model, we add the following constraint to the Laplacian matrix:
\begin{equation}
w_{ij}(k)=0,\quad k= k_2,..., k_2+g_b-1,
\end{equation}
where $(i,j)$ denotes the attacked link, and $k_2$ is the starting point of the attack.

Attackers are often rational, i.e., they intentionally attack those communication links that whose removal will lead to the most decrease of the network connectivity. To characterize the worst-case of targeted jamming attack, we have the following analysis. When link $(i,j)$ is attacked, the resulting Laplacian is given by \eqref{linkattackL}. Denote the Fiedler vector of $\textbf{L}$ as $\textbf{u}$, and thus $\textbf{u}^T\textbf{Lu}=\lambda_2(\textbf{L})$ based on the definition. By using \eqref{lambdainf}, we can obtain the following:
\begin{equation}
\begin{split}
\lambda_2 \big(\widetilde{\textbf{L}}(i,j)\big)&\le \textbf{u}^T\widetilde{\textbf{L}}(i,j) \textbf{u}\\
&=\textbf{u}^T \big( \textbf{L}-\big(\textbf{e}_i-\textbf{e}_j\big)\big(\tilde{\textbf{e}}_{i,j}-\tilde{\textbf{e}}_{j,i}\big)^T \big) \textbf{u}\\
&=\textbf{u}^T\textbf{Lu}-(u_i-u_j)(w_{ij}u_i-w_{ji}u_j)\\
&=\lambda_2(\textbf{L})-w_{ij}(u_i-u_j)^2.
\end{split}
\end{equation}
Therefore, by removing the link $(i,j)^*$, where
\begin{equation}
(i,j)^*\in\mathrm{arg} \max_{(i,j)\in{E}}\ w_{ij}(u_i-u_j)^2,
\end{equation}
the upper bound for $\lambda_2 \big(\widetilde{\textbf{L}}(i,j)\big)$ is the smallest, and the algebraic connectivity of $G$ decreases the most.
 
\subsection{Denial-of-Service Attack}
In addition to the targeted link removal attack, another attack scenario corresponding to the wireless communications is the DoS attack \cite{wood2002denial}. The DoS attack can be realized by a number of technical methods including Wormhole, Blackhole and Grayhole attacks \cite{jhaveri2012attacks}. Specifically, in the MRN, the malicious attacker generates false message to flood the robots' communication resources which result in the \textit{node removal} of the network. When a node $i\in V$ is removed from the network $G$, then all links that are connected to node $i$ should also be removed. Denote the Laplacian matrix of the network after the attack as $\widetilde {\textbf{L}}(i)$, and remind that $N_i$ is the set of nodes that are connected to node $i$. Then, similar to the analysis of the link removal, we have
\begin{equation}\label{nodeattackL}
\widetilde{\textbf{L}}(i)=\textbf{L}-\sum_{j\in N_i}\big(\textbf{e}_i-\textbf{e}_j\big)\big(\tilde{\textbf{e}}_{i,j}- \tilde{\textbf{e}}_{j,i}\big)^T.
\end{equation}
If robot $i$ is attacked at time $k_3$, and the attack lasts for $g_c$ time steps, then, the following constraint is added to the Laplacian matrix:
\begin{equation}
w_{ij}(k)=0,\quad \forall j\in V_1 \cup V_2,\ k=k_3,..., k_3+g_c-1.
\end{equation}

In addition, the worst-case of denial-of-service attack can be captured as follows. When robot $i$ is attacked, the Laplacian of $G$ is changed to \eqref{nodeattackL}. Similar to the analysis of the most severe link removal attack, we obtain the following:
\begin{equation}
\begin{split}
\lambda_2 \big(\widetilde{\textbf{L}}(i)\big)&\le \textbf{u}^T\widetilde{\textbf{L}}(i) \textbf{u}\\
&=\textbf{u}^T \big( \textbf{L}-\sum_{j\in N_i}\big(\textbf{e}_i-\textbf{e}_j\big)\big(\tilde{\textbf{e}}_{i,j}-\tilde{\textbf{e}}_{j,i}\big)^T \big) \textbf{u}\\
&=\textbf{u}^T\textbf{Lu} - \sum_{j\in N_i}(u_i-u_j)(w_{ij}u_i-w_{ji}u_j)\\
&=\lambda_2(\textbf{L})-\sum_{j\in N_i} w_{ij}(u_i-u_j)^2.
\end{split}
\end{equation}

Hence, by attacking robot $i^*$, where
\begin{equation}
i^*\in \mathrm{arg}\ \max_{i} \sum_{j\in N_i}w_{ij} (u_i-u_j)^2,
\end{equation}
the algebraic connectivity of $G$ encounters the most decrease.

\textit{\textbf{Remark 5}:} Depending on the scope of knowledge that the attacker has of the network, our proposed game framework can be used for attackers of different knowledge levels. For example, an attacker may know the information of the whole multi-level network or merely one sub-network. For the former case of attack, closed form solutions have already been presented above. The analysis  for the latter case can be done in a similar way by focusing on a smaller network space.

\section{Case Studies}\label{s5}
In this section, we validate the obtained results via case studies. Specifically, we first show the performance of the two-level MRN by using the iterative algorithm. Then, we further quantify the impact of malicious attacks introduced in Section \ref{s4}, and assess the resiliency and interdependency of the network to malicious attacks.

\subsection{Effectiveness of the Algorithm}\label{effectiveness}
In the case studies, both networks $G_1$ and $G_2$ include 6 mobile robots, and the minimum distance is set to $0.4$. The link strength parameters of inter-links in two networks are the same, i.e., $\rho_1=1$, $\rho_2=3$ and $\alpha=5$, and for intra-links, the parameters are equal to $\rho_1=1.5$, $\rho_2=5$ and $\alpha=4$. Without loss of generality, $P_1$ and $P_2$ have the same update frequency and they reconfigure their robotic networks in an alternating fashion. In addition, YALMIP is adopted to solve the corresponding SDP problems \cite{lofberg2004yalmip}.  The obtained results of the MRN configuration trajectory without attack and its corresponding network algebraic connectivity are shown in Fig. \ref{configuration_without_attack} and Fig. \ref{without_attack_connectivity}, respectively. In specific, the MRN attains an equilibrium state after 10 updates which validates the effectiveness of the iterative algorithm.

\begin{figure}[t]
  \centering
  \subfigure[]{
    \label{configuration_without_attack} 
    \includegraphics[width=1.65in]{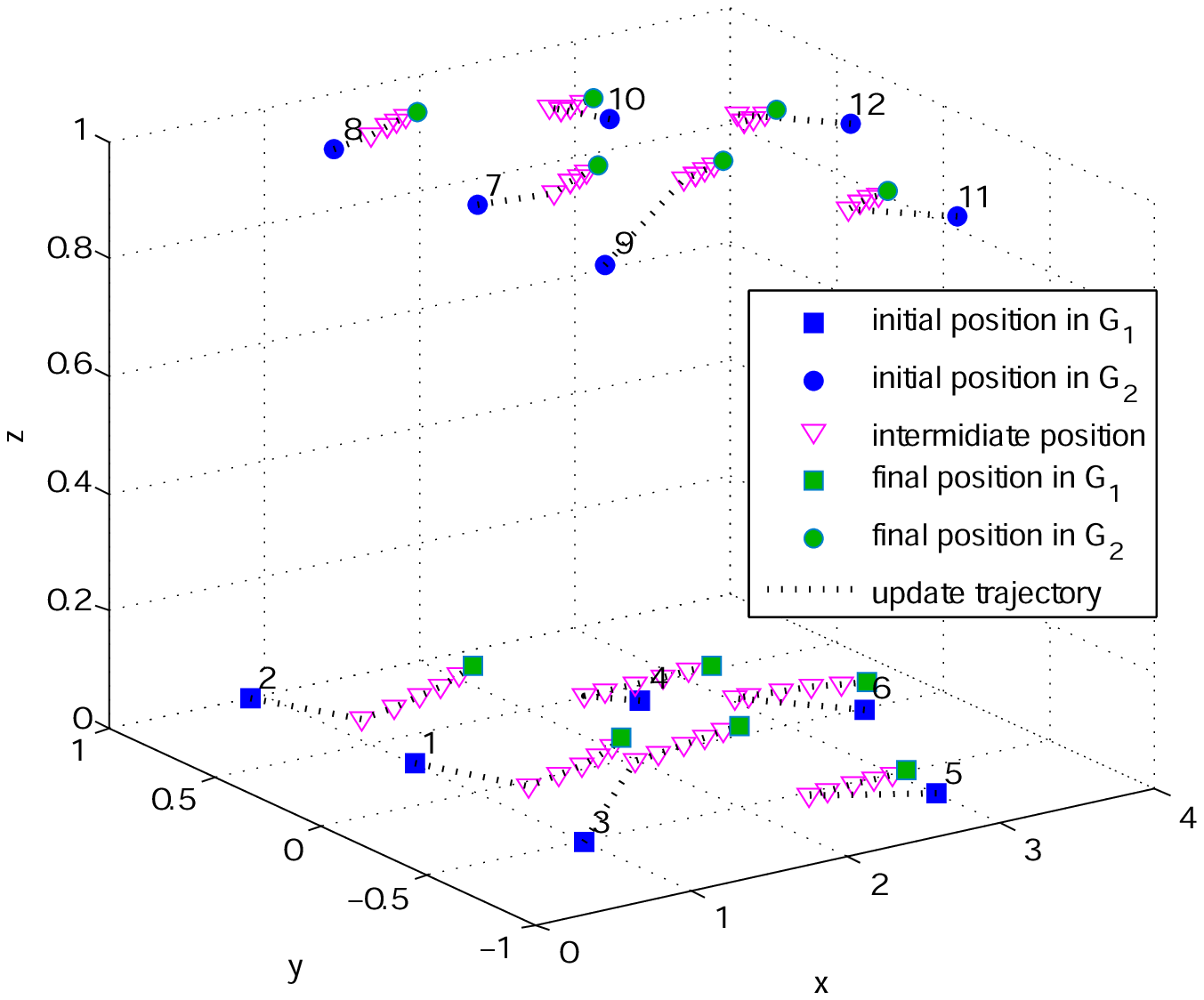}}
	 \subfigure[]{
    \label{without_attack_connectivity} 
    \includegraphics[width=1.6in]{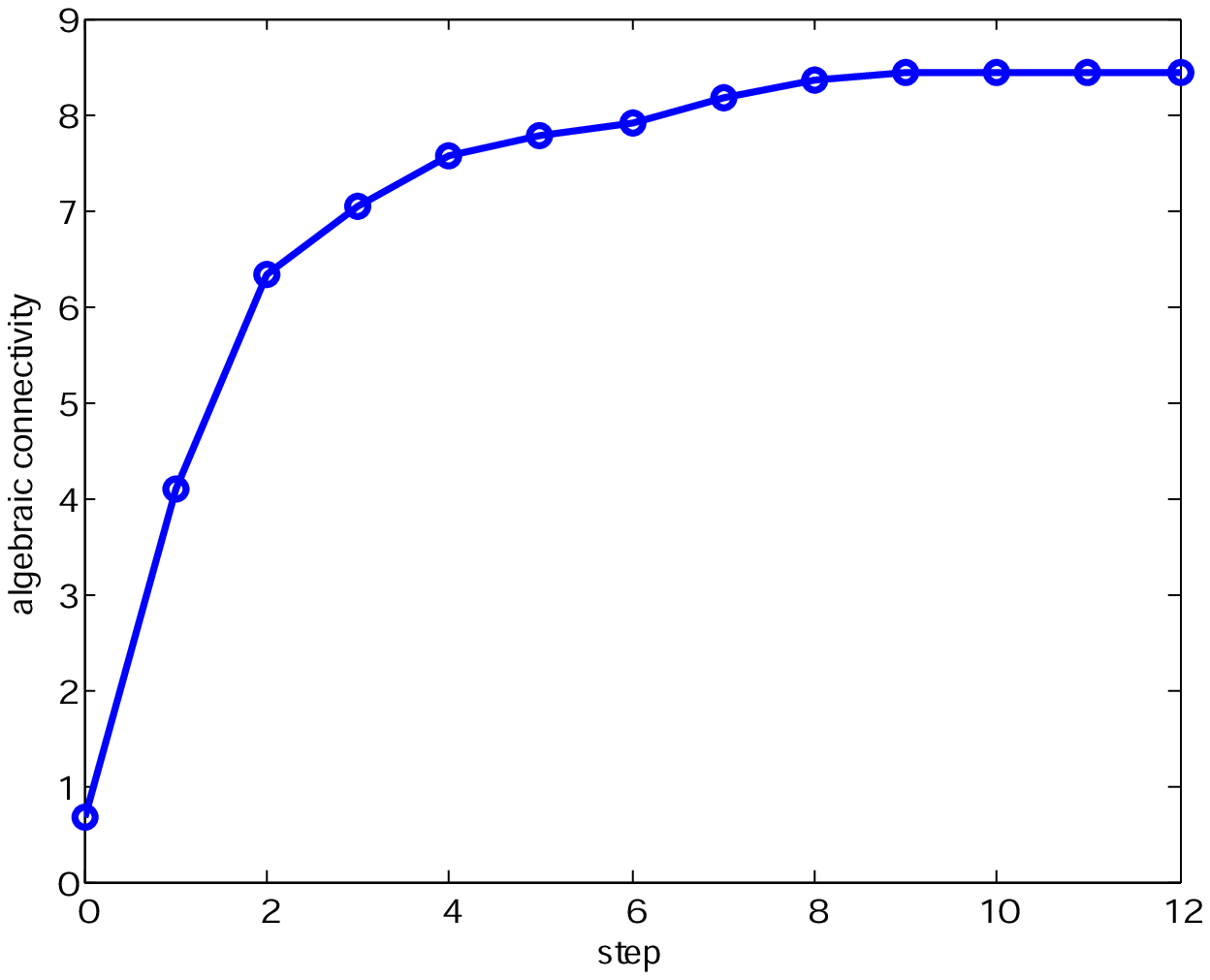}}
  \caption{(a) Configuration of a two-layer MRN without attack. (b) The resulting algebraic connectivity of the network formation game without attack.}
  \label{without_attack}
\end{figure}

\subsection{Impact of Malicious Attacks}\label{attacks}
In this section, we quantify the impact of each worst-case attack introduced in Section \ref{s4} on the network performance. For clarity, we assume that each attack is launched at step 6 during the network formation game, and without loss of generality, all attacks last for two steps. The result of network algebraic connectivity under each attack condition is shown in Fig. \ref{attack_connectivity}. We can see that the denial-of-service attack leads to the most decrease of the network connectivity comparing with other attacks, while the GPS spoofing attack is the least severe one.

\begin{figure}[t]
  \centering
  \subfigure[]{
    \label{attack_connectivity} 
    \includegraphics[width=1.65in]{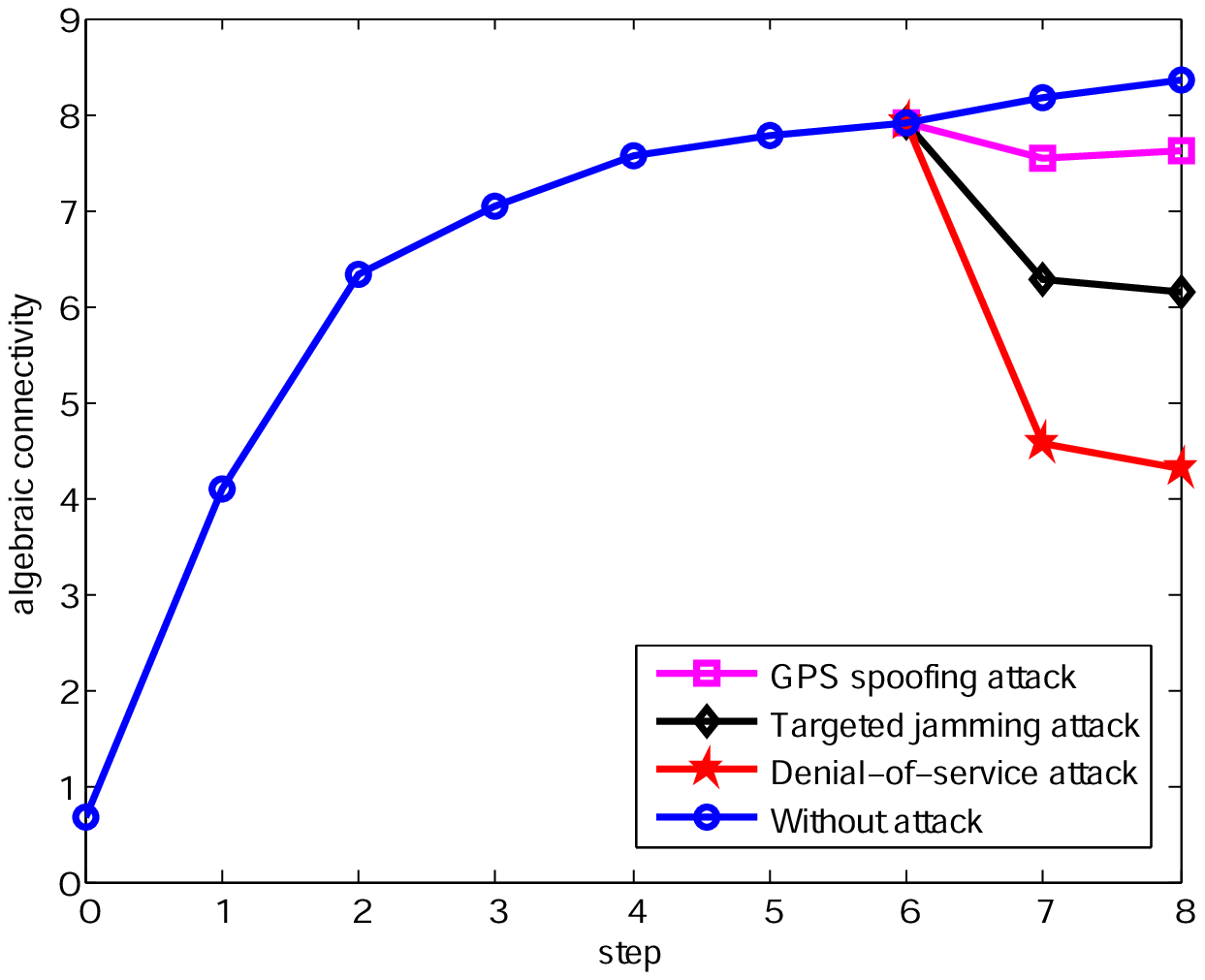}}
	 \subfigure[]{
    \label{attack_resiliency} 
    \includegraphics[width=1.65in]{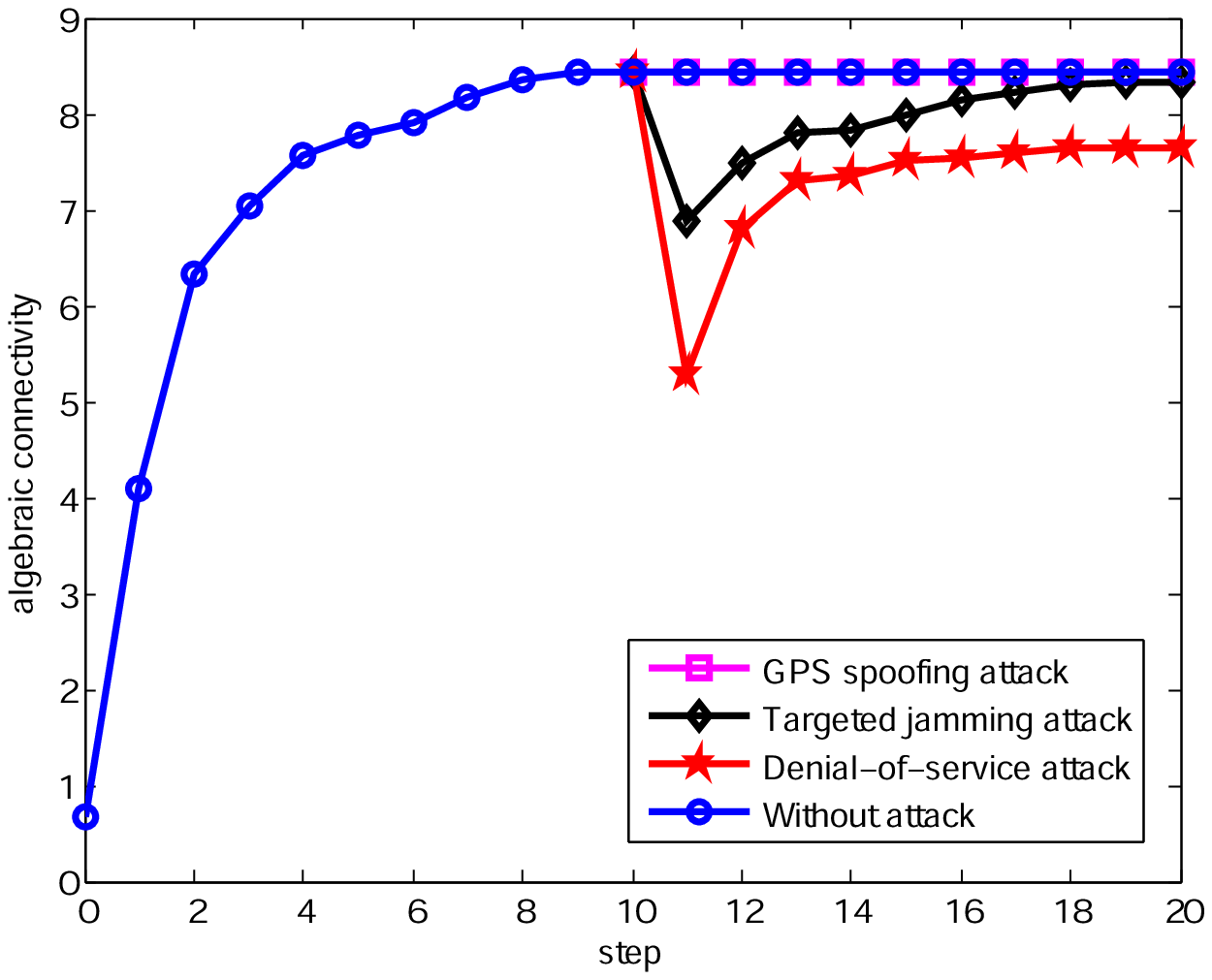}}
  \caption{(a) The impact of each worst-case attack on the network connectivity. (b) The algebraic connectivity of the network formation game without attack and under each attack condition. Under the adversarial environment, the worst-case attack happens at step 10, and it remains the same afterwards.}
\end{figure}

\subsection{Resiliency of the Network to Attacks}\label{resiliency}
After obtaining the impact of attacks on the network algebraic connectivity, the next step is to quantify the resiliency of the MRN to attacks. In specific, the resiliency metric is based on the system recovery speed and the recovery ability under the adversarial attacks. The adopted MRN model is the same as that in Section \ref{effectiveness}. In addition, we assume that the attacks are added to the MRN at step 10, and the attacker's action remains the same in the following steps.  Fig. \ref{attack_resiliency} shows the corresponding results. Specifically, the GPS spoofing attack does not impact the network performance in this case, since the attack is added at the point where MRN is of an equilibrium configuration, and the constrained physical movement of robots is not sufficient to decrease the network connectivity. For other cases, we can see that the MRN begins to recover after the attack happens which shows the high-level situational awareness of the MRN.  Moreover, besides the GPS spoofing attack, the MRN is the most resilient to the targeted jamming attack by using the designed iterative algorithm in terms of the agile recovery to a satisfying performance. The DoS attack can cause a huge loss of the algebraic connectivity, and the MRN cannot fully recover under this attack due to the removal of a robot. However, the rate of the network reaching a new NE is fast in this case.

\subsection{Interdependency of Multi-level Robotic Networks}\label{s6}
Comparing with single-level networks, a unique feature of the multi-level networks is their inherent interdependencies. We aim to show the existence of interdependency in the multi-level MRN in this section. Fig. \ref{interdependency_config} depicts the evolution of mobile network configuration corresponding to the DoS attack scenario in Section \ref{resiliency}. Remind that the attack happens at time step 10 where the network is under an equilibrium state. After introducing the attack, the two levels of robots will respond to it by moving to a new position, validating the resiliency of MRN.
In addition, by comparing two robotic networks in Fig. \ref{configuration_without_attack} and Fig. \ref{interdependency_config}, we find that robots in the network without attack (upper level) will move toward a position that can allow to set up the most intra-links with robots in the attacked network (lower level), and this fact corroborates the natural interdependency between two networks. Due to the interdependency, the whole mobile network can be more resilient to malicious attacks.

\begin{figure}[t]
\centering
\includegraphics[width=2.2in]{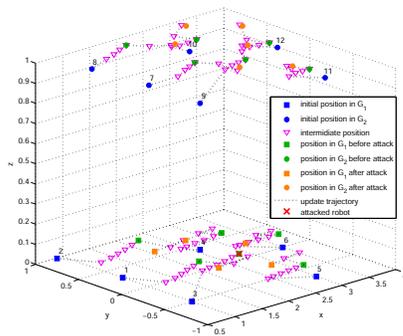}
\caption{MRN configuration under the DoS attack. The attack is introduced at step 10 where the initial network formation game reaches an equilibrium. Both mobile networks will respond to the DoS attack in the following update steps until reaching another equilibrium.}\label{interdependency_config}\vspace{-4mm}
\end{figure} 

\section{Conclusion}\label{conclusion}
In this paper, we have investigated the connectivity control of multi-level mobile robotic networks. We have developed a decentralized resilient algorithm to maximize the algebraic connectivity of the network to adversarial attacks, and shown its asymptotic convergence to a Nash equilibrium. Three types of attack models have been introduced, and their impacts have been quantified. Moreover, case studies have shown that the GPS spoofing attack has the least impact on the network performance, and the robotic network is the most resilient to the targeted jamming attack than other attacks by using the proposed control method. Future work can be designing a model predictive control algorithm that enables robots connectivity-aware during the network formation game.

\bibliographystyle{IEEEtran}
\bibliography{IEEEabrv,references}

\end{document}